\begin{document}
\mainmatter              
\title{Single-Pass PCA of Large High-Dimensional Data
\thanks{to appear in Proc. IJCAI 2017.}
}
%
%
\author{Wenjian Yu\inst{1} \and Yu Gu\inst{2} \and
Jian Li\inst{3} \and Shenghua Liu\inst{4} \and Yaohang Li\inst{5}}
%
%
%
\institute{Department of Computer Science and Technology, Tsinghua National Lab of Information Science and Technology, Tsinghua University, Beijing, China,\\
\email{yu-wj@tsinghua.edu.cn}
\and
Institute for Interdisciplinary Information Sciences, Tsinghua University, Beijing, China,\\
\email{guyu13@mails.tsinghua.edu.cn}
\and
Department of Electronic Engineering, Tsinghua University, Beijing, China,\\
\email{j-l14@mails.tsinghua.edu.cn}
\and
Institute of Computing Technology, Chinese Academy of Sciences, Beijing, China, \\
\email{liushenghua@ict.ac.cn}
\and
Department of Computer Science, Old Dominion University, Norfolk, VA 23529, USA, \\
\email{yaohang@cs.odu.edu} }

\maketitle              

\begin{abstract}
Principal component analysis (PCA) is a fundamental dimension reduction tool in statistics and machine learning. For large and high-dimensional data, computing the PCA (i.e., the singular vectors corresponding to a number of dominant singular values of the data matrix) becomes a challenging task. In this work, a single-pass randomized algorithm is proposed to compute PCA with only one pass over the data. It is suitable for processing extremely large and high-dimensional data stored in slow memory (hard disk) or the data generated in a streaming fashion. Experiments with synthetic and real data validate the algorithm's accuracy, which has orders of magnitude smaller error than an existing single-pass algorithm. For a set of high-dimensional data stored as a 150 GB file, the proposed algorithm is able to compute the first 50 principal components in just 24 minutes on a typical 24-core computer, with less than 1 GB memory cost.
\keywords{high-dimensional data, principal component analysis (PCA), randomized algorithm, single-pass algorithm, truncated singular value decomposition (SVD)}
\end{abstract}
\section{Introduction}

Many existing machine learning models, no matter supervised or
unsupervised, rely on dimension reduction of input data. 
Even the applications of Deep Neural Networks on natural 
language processing tasks~\cite{bahdanau2014neural,kim2014convolutional}, 
prefer to use an embedding of each word 
in a sentence~\cite{bengio2003neural,mikolov2013distributed}, which essentially reduces the data dimensionality.
Principal component analysis (PCA) is an efficient and 
well-structured dimension reduction technique~\cite{Halko2011,Learning2001}.
However, how to calculate PCA of large-size (say terabyte) and 
high-dimensional dense data in a limited-memory computation node
is still an open problem.
Plus some data are generated in stream, e.g., from internet traffic,
and signals from internet of things, we need a 
kind of pass-efficient algorithm, or even
single-pass algorithm, to realize the dimension reduction
of input data.

%

A single-pass algorithm has the benefit of requiring only one pass over the data, and 
is particularly useful and efficient for streaming data or data stored in slow
memory~\cite{Halko2011review,ZhangZ2016}. 
It also allows the computation with small or fixed RAM size \cite{TRIEST2016}. Although there are provable single-pass truncated SVD algorithms for symmetric positive semi-definite (SPSD) matrices \cite{drineas2005nystrom,wang2013improving,gittens2013revisiting,wang2016towards}, the study for more general matrices is not sufficient.
\cite{Halko2011review} proposed a single-pass algorithm for
approximately calculating SVD for general matrices, but with a significant cost of  
accuracy.
\cite{LargePCA2014} developed a PCA algorithm 
for large-size data, but only applicable to low-dimensional data (less than
one thousand in dimension). 
A recent single-pass algorithm was proposed for the PCA of matrix products~\cite{singlePCA_NIPS16}, 
which is a generalization of computing PCA of a matrix.
However, the algorithm also assumes that the data has a small dimension.  
Frequent-directions (FD) algorithm~\cite{Liberty2013} was
a single-pass and deterministic matrix sketching scheme \cite{woodruff2014sketching}, which is 
useful for matrix multiplication problem~\cite{ZhangZ2016}, but the
PCA computation.

Randomized matrix computation has gained significant increases in popularity as the data sets are becoming larger and larger~\cite{mahoney2011randomized}. It has been revealed that randomization can be a powerful computational resource for developing algorithms with improved runtime and stability properties \cite{Halko2011review,CACM2016,ZhangW2016,wangss2015}. Compared with classic algorithms, the randomized algorithm involves the same or fewer floating-point operations (\emph{flops}), and is more efficient for truely large high-dimensional data sets, by exploiting modern computing architectures.
An idea of randomization is using random projection to identify the subspace capturing the dominant actions of a matrix $\mathbf{A}$. With the subspace's orthonormal basis matrix $\mathbf{Q}$, a so-called QB approximation is obtained: $\mathbf{A} \approx \mathbf{QB}$. 
This produces a smaller sketch matrix $\mathbf{B}$, and facilitates the computation of near-optimal decompositions of $\mathbf{A}$. 
A simple implementation of this idea and related techniques and theories have been presented in \cite{Halko2011review}.
With the merit of requiring a small constant number of passes over data, this algorithm has been applied to compute PCA of data sets that are too large to be stored in RAM \cite{Halko2011}. It has also been employed to speed up the distributed PCA, without compromising the quality of the solution \cite{Woodruff2014}.
However, this basic \texttt{randQB} algorithm still involves several passes instead of a single pass over data, which makes it not efficient enough or infeasible for some situations.  


Progress has also been achieved based on the randomized algorithm for QB approximation. In \cite{Mary2015}, the basic \texttt{randQB} algorithm \cite{Halko2011review} was slightly modified for computing the QR factorization. The main efforts were paid to investigate the algorithm's performance scaling on shared-memory multi-core CPUs with multiple GPUs, and the comparison with the traditional QR factorization with column pivoting (QRCP). The results demonstrated that the randomized algorithm could be an excellent computational tool for many applications, with growing potential on the emerging parallel computers. 
In \cite{Martin2016}, a randomized blocked algorithm was proposed for computing rank-revealing factorizations in an incremental manner. Although it enables adaptive rank determination, the algorithm needs to access the matrix for a number of times and is not efficient for large-size data. 

Therefore, we reconstruct the randomized blocked algorithm \cite{Martin2016} and enforce numerical guarantee for the algorithm robustness as well, which results in a single-pass PCA algorithm owning the following advantages. 
\begin{itemize}[leftmargin=*]
\setlength{\itemsep}{0pt}
\setlength{\parsep}{0pt}
\setlength{\parskip}{0pt}
            \item \emph{Single-pass}: it involves only one pass over specified large high-dimensional data.
            \item \emph{Efficiency}: it has $\mathcal{O}(mnk)$ or $\mathcal{O}(mn\log (k))$ time complexity and $\mathcal{O}(k(m+n))$ space complexity for computing $k$ principal components of an $m\times n$ matrix data, and well adapts to parallel computing.
            \item \emph{Accuracy}: it has a theoretical error bound, and empirically shows much less error than the single-pass algorithm in \cite{Halko2011review}, offering good PCA accuracy for matrices
    with different distributions of singular values. 
\end{itemize}

We have examined the effectiveness of the proposed single-pass algorithm for
performing PCA on large-size ($\sim$150 GB) dense data with high dimension, which cannot be fit in RAM (32 GB). 
The experimental results show that our single-pass algorithm 
outperforms the standard SVD and existing competitors, by 
significantly reduced time and memory usage, and accuracy guarantees. For reproducibility, the codes of the proposed algorithm and programs for the experiments in Section 4 will be shared on \url{https://github.com/WenjianYu/rSVD-single-pass}.


\section{Preliminaries}
We assume that all matrices considered in this work are real valued, although the generalization to complex-valued matrices is of no difficulty. An orthonormal matrix denotes a matrix whose columns are a set of orthonormal vectors; 
$\mathbf{I}$ denotes the identity matrix.
And, we follow the Matlab convention for specifying row/column indices of a matrix.

\subsection{Singular Value Decomposition and PCA}
Let $\mathbf{A}$ denote an $m \times n$ matrix. The SVD of $\mathbf{A}$ is 
\begin{equation}\label{eq:svd}
\mathbf{A = U \Sigma V}^{\top},
\end{equation}
where $\mathbf{U}$ and $\mathbf{V}$ are $m \times $min($m,n$) and $n \times $min($m,n$) orthonormal matrices respectively, and $\mathbf{\Sigma}$ is a diagonal matrix. The diagonal entries of $\mathbf{\Sigma}$ are the descending singular values of $\mathbf{A}$: $\sigma_{11} \ge \sigma_{22} \ge \cdots \ge 0$. 
The columns of matrices $\mathbf{U}$ and $\mathbf{V}$ are the left and right singular vectors, respectively.

Taking the first $k,~k<\min(m,n)$, columns of  $\mathbf{U}$ and $\mathbf{V}$ respectively, and the first $k$ singular values in $\mathbf{\Sigma}$, we have the truncated (partial) SVD of matrix $\mathbf{A}$:
\begin{equation}\label{eq:psvd}
\mathbf{A}_k = \mathbf{U}_k \mathbf{\Sigma}_k \mathbf{V}_k^{\top} ,
\end{equation}
where $\mathbf{U}_k$ and $\mathbf{V}_k$ include the first $k$ columns of $\mathbf{U}$ and $\mathbf{V}$, respectively. 
$\mathbf{\Sigma}_k$ is the $k\times k$ up-left submatrix of $\mathbf{\Sigma}$. $\mathbf{A}_k$ is actually the optimal rank-$k$ approximation of $\mathbf{A}$, in terms of $l_2$-norm and Frobenius norm \cite{EY1936}.

The approximation properties of the SVD explain the equivalence between SVD and PCA. Suppose each row of matrix $\mathbf{A}$ is an observed data. The matrix is assumed to be centered, i.e., the mean of each column is equal to zero. Then, the 
leading right singular vectors $\{\mathbf{v}_i\}$ of $\mathbf{A}$ are the 
principal components. Particularly, $\mathbf{v}_1$ is the first principal component.

\subsection{The Basic Randomized Algorithm for PCA}
The algorithm in \cite{Halko2011} is based on the basic \texttt{randQB} algorithm for QB approximation, and described as Algorithm 1. $\mathbf{\Omega}$ is a Gaussian i.i.d. matrix. Replacing it with a \emph{structured} random matrix is also feasible, and can reduces the computational cost for a dense $\mathbf{A}$ from $\mathcal{O}(mnl)$ to $\mathcal{O}(mn\log(l))$ flops \cite{Halko2011review}.  
The over-sampling technique which uses $\mathbf{\Omega}$ with more than $k$ columns is employed for better accuracy \cite{Halko2011review}. Usually, the over-sampling parameter $s$ is a small integer, like 5 or 10. ``orth($\mathbf{X}$)'' denotes the orthonormalization of the columns of $\mathbf{X}$. In practice, it is achieved efficiently by a call to a packaged QR factorization (e.g., \texttt{qr(X, 0)} in Matlab), which implements the QR factorization without pivoting. 
        \begin{algorithm}
        \caption{Basic randomized scheme for truncated SVD}
        \label{alg1}
        \begin{algorithmic}[1]
            \REQUIRE $\mathbf{A}\in \mathbb{R}^{m\times n}$, rank $k$, over-sampling parameter $s$.
            \STATE $l=k+s$;
            \STATE $\mathbf{\Omega}=$  randn($n, l$);
            \STATE $\mathbf{Q}=$ orth$(\mathbf{A\Omega})$;
            \STATE $\mathbf{B}= \mathbf{Q}^{\top}\mathbf{A}$;
            \STATE $\mathbf{[\tilde{U}, S, V]}=$ svd($\mathbf{B}$);
        	  \STATE $\mathbf{U= Q\tilde{U}}$;
        	  \STATE $\mathbf{U}=\mathbf{U}(:, 1:k)$; $\mathbf{V}=\mathbf{V}(:, 1:k)$; $\mathbf{S}= \mathbf{S}(1:k,1:k)$;
        	  \STATE \textbf{return} $\mathbf{U, S, V}$.
        \end{algorithmic}
        \end{algorithm}
        
The first four steps in Algorithm 1 is the basic \texttt{randQB} scheme for building $\mathbf{A}$'s QB approximation. This procedure could not produce the optimal low-rank approximation. 
However, in many applications the optimal approximation is not necessary, and even impossible to obtain due to the high computational complexity of performing SVD. The existing work has revealed that this randomized algorithm often produces a good enough solution.
Compared with the classic rank-revealing QR factorization \cite{MatrixBook} for low-rank approximation, it has less computational cost and can obtain substantial speedup on a parallel computing platform \cite{Martin2016}.

The error of the randomized QB approximation could be large for the matrix whose singular values decay slowly \cite{Halko2011review}. This can be eased by a technique called \emph{power scheme} \cite{Rokhlin2009}. It is based on the fact that matrix $(\mathbf{A}\mathbf{A}^{\top})^P\mathbf{A}$ has exactly the same singular vectors as $\mathbf{A}$, but its $j$-th singular value is $\sigma_{jj}^{2P+1}$. This largely reduces the relative weight of the tail singular values.  Thus, performing the randomized QB procedure on $(\mathbf{A}\mathbf{A}^{\top})^P\mathbf{A}$ can achieve more accurate approximation. More theoretical analysis can be found in Sec. 10.4 of \cite{Halko2011review}. On the other hand, the power scheme increases the number of passes over $\mathbf{A}$ from $2$ to $2P\!+\!2$ \cite{Halko2011review}.

It should be mentioned that the output of the randomized approximation algorithms is a random variable, as it depends on the drawing of a random matrix. However, it has been proven that the variation of this random variable is small, which is called  \emph{the effect of concentration in measure}, suggesting that for practical purpose the algorithm is deterministic \cite{Candes2015Algori}.
For details, please refer to \cite{Halko2011review}.

\subsection{An Existing Single-Pass Algorithm}
Algorithm 1 still involves two passes over matrix $\mathbf{A}$. It is not favorable for the large or streaming data. In  \cite{Halko2011review}, a single-pass algorithm was proposed as a remedy. It is described as Algorithm 2, where only Step 2 needs the access of matrix $\mathbf{A}$.

        \begin{algorithm}
        \caption{An existing single-pass algorithm}
        \label{alg2}
        \begin{algorithmic}[1]
            \REQUIRE $\mathbf{A}\in \mathbb{R}^{m\times n}$, rank parameter $k$.
            \STATE Generate random $n\times k$ matrix $\mathbf{\Omega}$ and $m\times k$ matrix $\tilde{\mathbf{\Omega}}$;
            \STATE Compute $\mathbf{Y}\!=\!\mathbf{A\Omega}$ and $\tilde{\mathbf{Y}}\!=\!\mathbf{A}^{\top}\tilde{\mathbf{\Omega}}$ in a single pass over $\mathbf{A}$;
            \STATE $\mathbf{Q}=$ orth$(\mathbf{Y})$;  $\tilde{\mathbf{Q}}=$ orth$(\tilde{\mathbf{Y}})$; 
            \STATE Solve linear equation $\tilde{\mathbf{\Omega}}^{\top}\mathbf{QB}= \tilde{\mathbf{Y}}^{\top}\tilde{\mathbf{Q}}$ for $\mathbf{B}$;
            \STATE $[\tilde{\mathbf{U}}, \mathbf{S}, \tilde{\mathbf{V}}]=$ svd($\mathbf{B}$);
        	  \STATE $\mathbf{U= Q}\tilde{\mathbf{U}}$; $\mathbf{V= }\tilde{\mathbf{Q}}\tilde{\mathbf{V}}$;
        	  \STATE \textbf{return} $\mathbf{U, S, V}$.
        \end{algorithmic}
        \end{algorithm}

Step 3 of Algorithm 2 results in matrices $\mathbf{Q}$ and $\mathbf{\tilde{Q}}$ such that $\mathbf{A} \approx \mathbf{QQ}^{\top} \mathbf{A} \tilde{\mathbf{Q}}\tilde{\mathbf{Q}}^{\top}$. Then, the problem becomes how to compute the small matrix $\mathbf{B}= \mathbf{Q}^{\top}\mathbf{A} \tilde{\mathbf{Q}}$. One can find out that
\begin{equation}\label{eq:appox1}
\tilde{\mathbf{Q}}^{\top}\tilde{\mathbf{Y}} = \tilde{\mathbf{Q}}^{\top}\mathbf{A}^{\top}\tilde{\mathbf{\Omega}} \approx 
\tilde{\mathbf{Q}}^{\top}\mathbf{A}^{\top}\mathbf{Q}\mathbf{Q}^{\top}\tilde{\mathbf{\Omega}} = \mathbf{B}^{\top}\mathbf{Q}^{\top} \tilde{\mathbf{\Omega}},
\end{equation}
So, $\mathbf{B}$ is approximately computed in Step 4. Because there are two or more approximations in the deduction, the accuracy of this algorithm or its variants in \cite{Halko2011review} is not good. We will reveal this through experiments.

\subsection{The Randomized Blocked Algorithm}
The randomized blocked algorithm in \cite{Martin2016} is inspired by a greedy  Gram-Schmidt procedure for the orthonormalization step of basic \texttt{randQB}, which constitutes an iterative procedure with the error of the QB approximation updated. Then, the algorithm  is converted to a blocked version to attain high performance of linear algebraic computation (see Fig. 1). 
It is easy to prove that, if the algorithm is executed in exact arithmetic $\mathbf{Q}$ is orthonormal, $\mathbf{B}=\mathbf{Q}^{\top} \mathbf{A}$, and
after Step (6) $\mathbf{A}$ becomes the approximation error: $\mathbf{A-QB}$. 

\begin{figure}[h]
\centering 
\includegraphics[height= 1.7in]{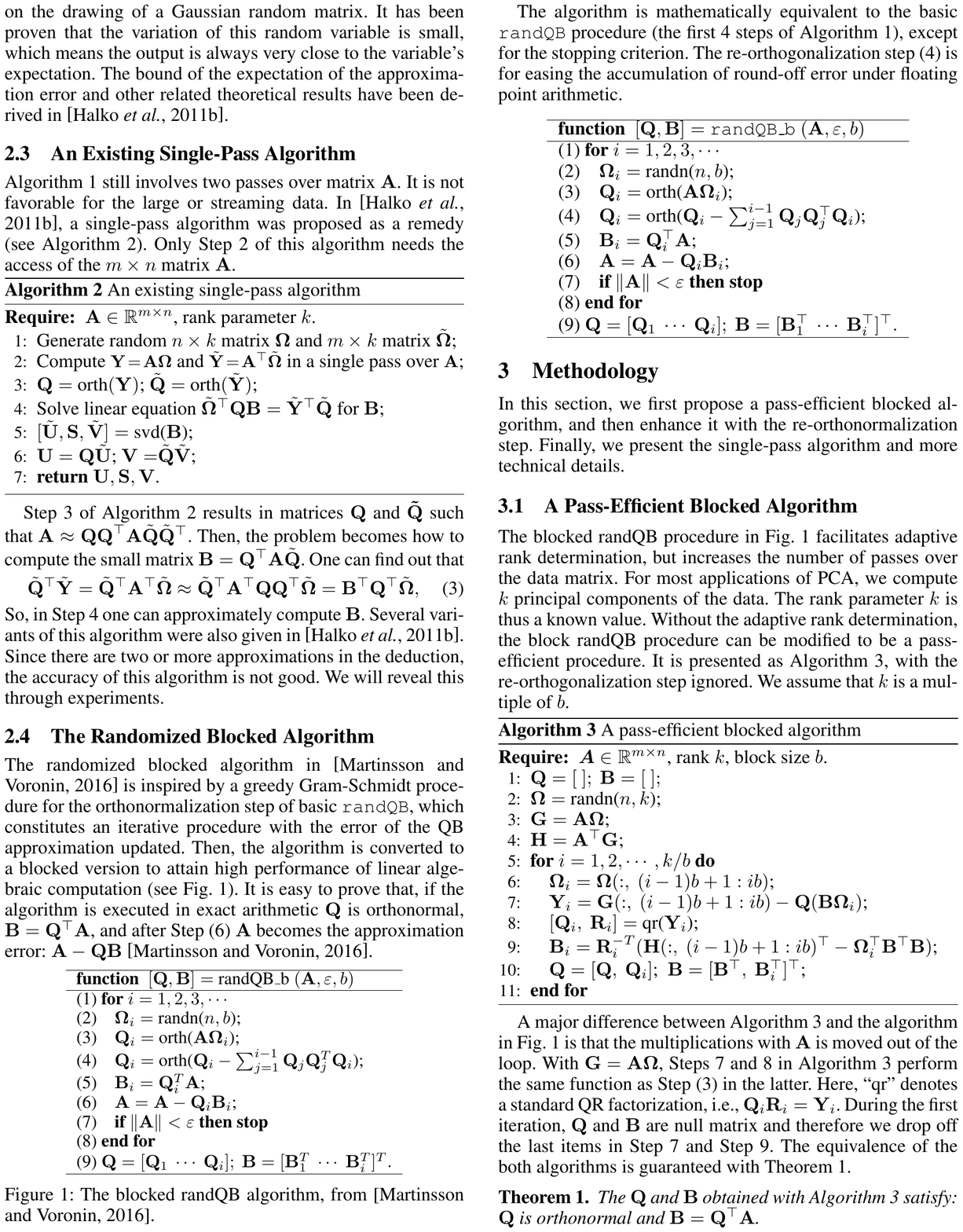}
\caption{The blocked randQB algorithm, from \cite{Martin2016}.}
\label{fig1}
\end{figure}

The algorithm is mathematically equivalent to the basic \texttt{randQB} procedure (the first 4 steps of Algorithm 1), except for the stopping criterion.  
Notice that the re-orthogonalization step (4) is for easing the accumulation of numerical round-off error. 
Experiments in \cite{Martin2016} showed that it has the same or better accuracy than the column-pivoted QR factorization, and runs much faster on multi-core architectures.


\section{Methodology}
In this section, we first propose a pass-efficient blocked algorithm, and then enhance its robustness with the re-orthonormalization. Finally, we present the single-pass algorithm for computing PCA.

\subsection{A Pass-Efficient Blocked Algorithm}
The blocked randQB procedure in Fig. 1 facilitates adaptive rank determination, but increases the number of passes over the data matrix. For many scenarios of using PCA, the rank parameter $k$ is a known value. Otherwise, it is often referred to as the fixed-precision problem \cite{Halko2011review}. The algorithm proposed in this work can be extended to handle the fixed-precision problem, which is however not the main focus of this paper.

Without the evaluation of approximation error, the block randQB procedure can be modified to be a pass-efficient procedure. It is presented as Algorithm 3, with the re-orthogonalization step ignored. For simplicity, we also assume that $k$ is a multiple of $b$.
        \begin{algorithm}
        \caption{A pass-efficient blocked algorithm}
        \label{alg3}
        \begin{algorithmic}[1]
            \REQUIRE $\mathbf{A}\in \mathbb{R}^{m\times n}$, rank parameter $k$, block size $b$.
            \STATE $\mathbf{Q} = [~]; ~ \mathbf{B} = [~];$
            \STATE $\mathbf{\Omega}= $ randn($n, k$);
        \STATE $\mathbf{G} = \mathbf{A\Omega}$;
        \STATE $\mathbf{H} = \mathbf{A}^{\top}\mathbf{G}$;
            \FOR{$i=1, 2, \cdots, k/b$}
            \STATE $\mathbf{\Omega}_i = \mathbf{\Omega}(:,~ (i-1)b+1:ib)$;
            \STATE $\mathbf{Y}_i = \mathbf{G}(:,~ (i-1)b+1:ib) - \mathbf{Q}(\mathbf{B\Omega}_i)$;
			\STATE $[\mathbf{Q}_i,~\mathbf{R}_i]= $ qr($\mathbf{Y}_i$);
			\STATE $\mathbf{B}_i= \mathbf{R}_i^{-\top}(\mathbf{H}(:,~ (i-1)b+1:ib)^{\top}- \mathbf{\Omega}_i^{\top}\mathbf{B}^{\top}\mathbf{B}) $;
            \STATE $\mathbf{Q} = [\mathbf{Q}, ~\mathbf{Q}_i]$;~ $\mathbf{B} = [\mathbf{B}^{\top},  ~\mathbf{B}_i^{\top}]^{\top}$;
            \ENDFOR
        
        \end{algorithmic}
        \end{algorithm}

A major difference between Algorithm 3 and the algorithm in Fig. 1 is that the multiplications with $\mathbf{A}$ is moved out of the loop. With $\mathbf{G=A\Omega}$, Steps 7 and 8 in Algorithm 3 perform the same function as Step (3) in the latter. Here, ``qr'' denotes a standard QR factorization, i.e., $\mathbf{Q}_i\mathbf{R}_i=\mathbf{Y}_i$. During the first iteration, $\mathbf{Q}$ and   $\mathbf{B}$ are null matrices and therefore we should drop off the last items in Step 7 and Step 9.
The equivalence of the  both algorithms is guaranteed with Theorem 1.
        \begin{theorem}\label{thm1_block}
           The $\mathbf{Q}$ and $\mathbf{B}$ obtained with Algorithm 3 satisfy: $\mathbf{Q}$ is orthonormal and  $\mathbf{B}=\mathbf{Q}^{\top}\mathbf{A}$.
        \end{theorem}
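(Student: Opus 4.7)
The plan is induction on the loop index $i$, maintaining two invariants at the start of each iteration:
(i) the current $\mathbf{Q}$ has orthonormal columns, and (ii) $\mathbf{B} = \mathbf{Q}^\top \mathbf{A}$. The base case is trivial since both matrices are initialized empty. For the inductive step I need to show that after appending $\mathbf{Q}_i$ and $\mathbf{B}_i$, both invariants still hold.

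First I would rewrite $\mathbf{Y}_i$ in a clean form. Since $\mathbf{G} = \mathbf{A}\mathbf{\Omega}$, the $i$-th column block is $\mathbf{G}(:,(i{-}1)b{+}1{:}ib) = \mathbf{A}\mathbf{\Omega}_i$, and by the inductive hypothesis $\mathbf{Q}\mathbf{B} = \mathbf{Q}\mathbf{Q}^\top \mathbf{A}$. Therefore
\begin{equation*}
\mathbf{Y}_i \;=\; \mathbf{A}\mathbf{\Omega}_i - \mathbf{Q}\mathbf{Q}^\top \mathbf{A}\mathbf{\Omega}_i \;=\; (\mathbf{I}-\mathbf{Q}\mathbf{Q}^\top)\,\mathbf{A}\mathbf{\Omega}_i,
\end{equation*}
i.e.\ exactly the orthogonal-complement projection that the blocked scheme of Fig.~1 computes at the same stage. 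From $[\mathbf{Q}_i,\mathbf{R}_i]=\mathrm{qr}(\mathbf{Y}_i)$ the columns of $\mathbf{Q}_i$ are orthonormal, and
\begin{equation*}
\mathbf{Q}^\top \mathbf{Q}_i \;=\; \mathbf{Q}^\top \mathbf{Y}_i \mathbf{R}_i^{-1} \;=\; \bigl(\mathbf{Q}^\top\mathbf{A}\mathbf{\Omega}_i - \mathbf{Q}^\top\mathbf{Q}\mathbf{Q}^\top\mathbf{A}\mathbf{\Omega}_i\bigr)\mathbf{R}_i^{-1} \;=\; \mathbf{0},
\end{equation*}
using $\mathbf{Q}^\top\mathbf{Q}=\mathbf{I}$ from invariant (i). So the enlarged $[\mathbf{Q},\mathbf{Q}_i]$ is again orthonormal, which handles invariant (i).

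Next I would verify invariant (ii) for the new block, that is $\mathbf{B}_i = \mathbf{Q}_i^\top \mathbf{A}$. Using $\mathbf{Q}_i^\top = \mathbf{R}_i^{-\top}\mathbf{Y}_i^\top$, I compute
\begin{equation*}
\mathbf{Q}_i^\top \mathbf{A} \;=\; \mathbf{R}_i^{-\top}\bigl(\mathbf{\Omega}_i^\top \mathbf{A}^\top \mathbf{A} - \mathbf{\Omega}_i^\top \mathbf{A}^\top\mathbf{Q}\mathbf{Q}^\top \mathbf{A}\bigr).
\end{equation*}
Now $\mathbf{A}^\top\mathbf{A}\mathbf{\Omega}_i$ is exactly the $i$-th column block of $\mathbf{H}=\mathbf{A}^\top\mathbf{G}$ (so the first parenthesised term equals $\mathbf{H}(:,(i{-}1)b{+}1{:}ib)^\top$), and by invariant (ii) $\mathbf{Q}^\top\mathbf{A} = \mathbf{B}$, giving $\mathbf{A}^\top\mathbf{Q}\mathbf{Q}^\top\mathbf{A} = \mathbf{B}^\top\mathbf{B}$. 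Substituting matches the expression for $\mathbf{B}_i$ in Step~9 of Algorithm~3 exactly. Since $[\mathbf{Q},\mathbf{Q}_i]^\top\mathbf{A} = [\mathbf{B}^\top,\mathbf{B}_i^\top]^\top$, the updated $\mathbf{B}$ still equals $\mathbf{Q}^\top\mathbf{A}$, closing the induction.

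The only potentially delicate point is the invertibility of $\mathbf{R}_i$ required to write $\mathbf{Q}_i = \mathbf{Y}_i\mathbf{R}_i^{-1}$ and to define $\mathbf{B}_i$ via $\mathbf{R}_i^{-\top}$. In exact arithmetic this follows almost surely because $\mathbf{\Omega}$ is a Gaussian matrix, so with probability one the columns of $\mathbf{Y}_i=(\mathbf{I}-\mathbf{Q}\mathbf{Q}^\top)\mathbf{A}\mathbf{\Omega}_i$ are linearly independent whenever the residual range has dimension at least $b$; I would mention this as a standard genericity assumption (the algorithm is stated in exact arithmetic, and Step~4's re-orthogonalization, omitted here, is what guards against the finite-precision version of this concern). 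Everything else is bookkeeping.
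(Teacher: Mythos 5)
Your proof is correct and follows essentially the same route as the paper's: induction on the loop index with the two invariants ($\mathbf{Q}$ orthonormal, $\mathbf{B}=\mathbf{Q}^{\top}\mathbf{A}$), the identification $\mathbf{Y}_i=(\mathbf{I}-\mathbf{Q}\mathbf{Q}^{\top})\mathbf{A}\mathbf{\Omega}_i$, orthogonality of $\mathbf{Q}_i$ to the previous basis via $\mathbf{Q}^{\top}\mathbf{Y}_i=\mathbf{0}$, and the algebraic match of $\mathbf{Q}_i^{\top}\mathbf{A}$ with Step~9's formula using $\mathbf{H}_i=\mathbf{A}^{\top}\mathbf{A}\mathbf{\Omega}_i$ and $\mathbf{A}^{\top}\mathbf{Q}\mathbf{Q}^{\top}\mathbf{A}=\mathbf{B}^{\top}\mathbf{B}$. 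The only cosmetic differences are that you fold the first iteration into the general step via an empty base case (the paper treats $i=1$ explicitly) and you make the invertibility of $\mathbf{R}_i$ an explicit genericity assumption, which the paper subsumes under its full-rank assumption on $\mathbf{\Omega}$ and its projector lemma.
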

    \begin{proof}
                We prove Theorem 1 via induction. For any variable $v$ \emph{after} the $i$-th iteration of the loop is executed, we use $v^{(i)}$ to denote its value. Moreover, we assume the random matrix $\mathbf{\Omega}$ is of full column rank. In the base case, $\mathbf{Q}^{(1)}=\mathbf{Q}_1$ is orthonormal because of Step 8 in Algorithm 3. It also ensures that $\mathbf{Q}_i$ is orthonormal, and $\mathbf{Q}_i\mathbf{R}_i = \mathbf{Y}_i$. 
So,
\begin{equation}\label{eq:b1}
\begin{aligned}
\mathbf{B}^{(1)} = \mathbf{B}_1= \mathbf{R}_1^{-\top}\mathbf{\Omega}_1^{\top} \mathbf{A}^{\top}\mathbf{A} =(\mathbf{A} \mathbf{\Omega}_1 \mathbf{R}_1^{-1})^{\top}\mathbf{A}\\
 = (\mathbf{Y}_1\mathbf{R}_1^{-1})^{\top}\mathbf{A} = \left(\mathbf{Q}^{(1)}\right)^{\top}\mathbf{A} ~ .
\end{aligned}
\end{equation}

            \par Now, suppose the proposition holds for the $i$-th iteration. We need to prove $\mathbf{Q}^{(i+1)}$ is orthonormal and $\mathbf{B}^{(i+1)} = \left(\mathbf{Q}^{(i+1)}\right)^{\top}\mathbf{A}$. We first check the orthogonality of $\mathbf{Q}_{i+1}$.
         \begin{equation} \label{eq_proof_b1}
            \begin{aligned}
                \mathbf{Q}_{i+1}^{\top} \mathbf{Q}^{(i)} 
                &= \left(\left(\mathbf{A} - \mathbf{Q}^{(i)}\mathbf{B}^{(i)}\right)\mathbf{\Omega}_{i+1}\mathbf{R}_{i+1}^{-1}\right)^{\top}  \mathbf{Q}^{(i)}\\
                &= \left(\left(\mathbf{A} - \mathbf{Q}^{(i)}\left(\mathbf{Q}^{(i)}\right)^{\top}\mathbf{A}\right)\mathbf{\Omega}_{i+1}\mathbf{R}_{i+1}^{-1}\right)^{\top}  \mathbf{Q}^{(i)}\\
                &= \left((\mathbf{I} - \mathbf{P}_{Q^{(i)}})\mathbf{A}\mathbf{\Omega}_{i+1}\mathbf{R}_{i+1}^{-1}\right)^{\top}  \mathbf{Q}^{(i)}\\
                &= \left(\mathbf{A}\mathbf{\Omega}_{i+1}\mathbf{R}_{i+1}^{-1}\right)^{\top}  (\mathbf{Q}^{(i)}-\mathbf{P}_{Q^{(i)}}\mathbf{Q}^{(i)}) \\
                & = \mathbf{O}.
            \end{aligned}
           \end{equation}
The last two equalities of (\ref{eq_proof_b1}) is based on the properties of projector matrix $\mathbf{P}_{Q^{(i)}}\equiv \mathbf{Q}^{(i)}\left(\mathbf{Q}^{(i)}\right)^{\top}$.
See the Appendix.
Eq. (\ref{eq_proof_b1}) guarantees that $\mathbf{Q}^{(i+1)}$ is an orthonormal matrix. Then, 
\[
            \begin{aligned}\label{eq_proof_b2}
                \mathbf{B}_{i+1} 
                &= \mathbf{R}_{i+1}^{-{\top}} \mathbf{\Omega}_{i+1}^{\top} \left( \mathbf{A}^{\top}\mathbf{A} - {\mathbf{B}^{(i)}}^{\top}{\mathbf{B}^{(i)}}\right) \\
                &\stackrel{1}{=} \mathbf{R}_{i+1}^{-{\top}} \mathbf{\Omega}_{i+1}^{\top} {\left(\mathbf{A}-\mathbf{Q}^{(i)}\mathbf{B}^{(i)}\right)}^{\top}\left(\mathbf{A}-\mathbf{Q}^{(i)}\mathbf{B}^{(i)}\right) \\
                &=   \left(\mathbf{A}\mathbf{\Omega}_{i+1}\mathbf{R}_{i+1}^{-1} -\mathbf{Q}^{(i)}\mathbf{B}^{(i)}\mathbf{\Omega}_{i+1}\mathbf{R}_{i+1}^{-1}\right)^{\top}\left(\mathbf{A}-\mathbf{Q}^{(i)}\mathbf{B}^{(i)}\right) \\
&= \mathbf{Q}_{i+1}^{\top}\left(\mathbf{A}-\mathbf{Q}^{(i)}\mathbf{B}^{(i)}\right)  
                \stackrel{2}{=} \mathbf{Q}_{i+1}^{\top}\mathbf{A} ~,
            \end{aligned}
\]
where equality 1 holds due to $\mathbf{Q}^{(i)}$ is orthonormal and $\mathbf{B}^{(i)}=\left({\mathbf{Q}^{(i)}}\right)^{\top}  \mathbf{A}$. Equality 2 just follows from (\ref{eq_proof_b1}). 

Therefore, $\mathbf{Q}$ is orthonormal and $\mathbf{B} = \mathbf{Q}^{\top}\mathbf{A}$, based on  the induction hypothesis and Step 10 in Algorithm 3.
This ends the proof. 
\end{proof}

As the blocked randQB algorithm is mathematically equivalent to  the basic \texttt{randQB}, Algorithm 3 inherits the theoretical error bound (if ignoring the round-off error):
\begin{equation}\label{eq:statisticError}
\mathbb{E}\left( \|\mathbf{A-QB}\|_{\mathrm{F}} \right) \le \left( 1+ \frac{k}{s-1}  \right)^{1/2} \left( \sum_{j=k+1}^{\min(m,n)}\sigma_{jj}^2 \right)^{1/2},
\end{equation}
where $\mathbb{E}$ denotes expectation. We see that the theoretically minimal error is only magnified by a factor of $(1+\frac{k}{s-1})^{1/2}$. If measuring the error with $l_2$-norm, we have a similar error bound formula (see Theorem 10.6 of \cite{Halko2011review}). Moreover, it can be shown that the likelihood of a substantial deviation from the expectation is extremely small; see Sec. 10.3 of \cite{Halko2011review} for a proof.

\subsection{The Version with Re-Orthogonalization}
Due to the accumulation of round-off errors, the orthonormality among the columns in $\{\mathbf{Q}_1,\mathbf{Q}_2, \cdots \}$ may lose. This affects the correctness of some statements in Algorithm 3, and increases the error of its output. To fix this problem, we explicitly reproject $\mathbf{Q}_i$ away from the span of the previously computed basis vectors, just as what is done in \cite{Martin2016}. Then, the formula for matrix $\mathbf{B}_i$ is revised to incorporate the modified $\mathbf{Q}_i$.

The re-orthogonalization step corresponds to:
\begin{equation}\label{eq:re-orth1}
\tilde{\mathbf{Q}}_i\tilde{\mathbf{R}}_i=\mathbf{Q}_i-\mathbf{Q}\mathbf{Q}^{\top}\mathbf{Q}_i,
\end{equation}
where $\tilde{\mathbf{Q}}_i \neq \mathbf{Q}_i$ and $\tilde{\mathbf{R}}_i \neq \mathbf{I}$ due to round-off error. And, $\tilde{\mathbf{Q}}_i$ is better orthogonal to the previously generated $\{\mathbf{Q}_1, \mathbf{Q}_2, \cdots, \mathbf{Q}_{i-1}\}$ than $\mathbf{Q}_i$.
Since $\mathbf{Q}_i\mathbf{R}_i=\mathbf{Y}_i$, 
\begin{equation}\label{eq:tildeQi}
\tilde{\mathbf{Q}}_i= \left(\mathbf{I}-\mathbf{Q}\mathbf{Q}^{\top}\right) \mathbf{Y}_i \mathbf{R}_i^{-1} \tilde{\mathbf{R}}_i^{-1},
\end{equation}
\begin{equation}\label{eq:newBi}
            \begin{aligned}
\tilde{\mathbf{B}}_i =& \tilde{\mathbf{Q}}_i^{\top} \mathbf{A}
=(\tilde{\mathbf{R}}_i\mathbf{R}_i)^{-{\top}} \mathbf{Y}_i^{\top}\left(\mathbf{I}-\mathbf{Q}\mathbf{Q}^{\top}\right) \mathbf{A}  \\
=& (\tilde{\mathbf{R}_i}\mathbf{R}_i)^{-\top} (\mathbf{\Omega}_i^{\top} \mathbf{A}^{\top} - \mathbf{\Omega}_i^{\top} \mathbf{B}^{\top} \mathbf{Q}^{\top}) (\mathbf{A}- \mathbf{Q}\mathbf{Q}^{\top}\mathbf{A}) \\
=& (\tilde{\mathbf{R}}_i\mathbf{R}_i)^{-{\top}} \left(\mathbf{H}_i^{\top}\! -\! \mathbf{Y}_i^{\top} \mathbf{Q}\mathbf{B}\! -\! \mathbf{\Omega}_i^{\top} \mathbf{B}^{\top} \mathbf{B} \right),
            \end{aligned}
\end{equation}
where $\mathbf{H}_i$ denotes $\mathbf{H}(:, (i-1)b+1:ib)$. 
The last equality utilizes that $\mathbf{B}=\mathbf{Q}^{\top}\mathbf{A}$,
although this may not hold after a large number of iterations due to numerical round-off error.

Based on (\ref{eq:re-orth1}) and (\ref{eq:newBi}), the version with re-orthogonalization can be obtained  by replacing Step 9 in Algorithm 3 with the following steps:

{\centering
\begin{tabular}{l}
\toprule
9: ~$[\mathbf{Q}_i,~\tilde{\mathbf{R}}_i]= $ qr$(\mathbf{Q}_i-\mathbf{Q}(\mathbf{Q}^{\top}\mathbf{Q}_i))$; \\
9': $\mathbf{R}_i=\tilde{\mathbf{R}}_i\mathbf{R}_i$; \\
9'': $\mathbf{B}_i\!=\! \mathbf{R}_i^{-{\top}}(\mathbf{H}(:,~ (i\!-\!1)b\!+\!1:ib)^{\top}\!-\! \mathbf{Y}_i^{\top}\mathbf{Q}\mathbf{B} \!-\! \mathbf{\Omega}_i^{\top}\mathbf{B}^{\top}\mathbf{B})$; \\
\bottomrule
\end{tabular}
}

\noindent{Here, $\mathbf{Q}_i$ and $\mathbf{B}_i$ are overwritten to stand for $\tilde{\mathbf{Q}}_i$ and $\tilde{\mathbf{B}}_i$.}

\subsection{The Single-Pass Algorithm for PCA}
An important feature of Algorithm 3 is that Steps 3 and 4 can be executed with only one pass over matrix $\mathbf{A}$. Suppose $\mathbf{a}_i$ and $\mathbf{g}_i$ denote the $i$-th \emph{rows} of matrix $\mathbf{A}$ and $\mathbf{G}$, respectively. 
\begin{equation}
\mathbf{H}= [ \mathbf{a}_1^{\top}, \mathbf{a}_2^{\top}, \cdots, \mathbf{a}_m^{\top} ] \\
\left[ \begin{array}{c}
                 \mathbf{g}_1\\
                 \mathbf{g}_2\\
                 \vdots\\
                 \mathbf{g}_m\\
            \end{array}\right] \\
  =\sum_{i=1}^m \mathbf{a}_i^{\top}\mathbf{g}_i        .
\end{equation}
So, with the $i$-th row of $\mathbf{A}$, we can calculate the $i$-th row of $\mathbf{G}$ with Step 3, and then the $i$-th item in the summation for calculating $\mathbf{H}$ as (10). Combined with the over-sampling, the single-pass algorithm for computing PCA is as Algorithm 4.
 
        \begin{algorithm}
        \caption{A single-pass algorithm for computing PCA}
        \label{alg3}
        \begin{algorithmic}[1]
            \REQUIRE $\mathbf{A}\in \mathbb{R}^{m\times n}$, rank parameter $k$, block size $b$.
            \STATE $\mathbf{Q} = [~]; ~ \mathbf{B} = [~];$
            \STATE Choose $l=tb$, which is slightly larger than $k$;
            \STATE $\mathbf{\Omega}= $ randn($n, l$); $\mathbf{G} = [~]$; Set $\mathbf{H}$ to an $n\times l$ zero matrix;
            \WHILE{$\mathbf{A}$ is not completely read through}
            \STATE Read next few rows of $\mathbf{A}$ into RAM, denoted by $\mathbf{a}$;
            \STATE $\mathbf{g}=\mathbf{a} \mathbf{\Omega}$; ~ $\mathbf{G}= [\mathbf{G}; ~ \mathbf{g}]$;
            \STATE $\mathbf{H}=\mathbf{H}+\mathbf{a}^\top \mathbf{g}$;
            \ENDWHILE
            \FOR{$i=1, 2, \cdots, t$}
            \STATE $\mathbf{\Omega}_i = \mathbf{\Omega}(:,~ (i-1)b+1:ib)$;
            \STATE $\mathbf{Y}_i = \mathbf{G}(:,~ (i-1)b+1:ib) - \mathbf{Q}(\mathbf{B\Omega}_i)$;
			\STATE $[\mathbf{Q}_i,~\mathbf{R}_i]= $ qr($\mathbf{Y}_i$);
\STATE $[\mathbf{Q}_i,~\tilde{\mathbf{R}}_i]= $ qr$(\mathbf{Q}_i-\mathbf{Q}(\mathbf{Q}^{\top}\mathbf{Q}_i))$;
\STATE $\mathbf{R}_i=\tilde{\mathbf{R}}_i\mathbf{R}_i$;
\STATE $\mathbf{B}_i\!=\! \mathbf{R}_i^{-{\top}}(\mathbf{H}(:,~ (i\!-\!1)b\!+\!1:ib)^{\top}\!-\! \mathbf{Y}_i^{\top}\mathbf{Q}\mathbf{B} \!-\! \mathbf{\Omega}_i^{\top}\mathbf{B}^{\top}\mathbf{B})$;                         
				\STATE $\mathbf{Q} = [\mathbf{Q}, ~\mathbf{Q}_i]$;~ $\mathbf{B} = [\mathbf{B}^{\top},  ~\mathbf{B}_i^{\top}]^{\top}$;
            \ENDFOR
            \STATE $[\tilde{\mathbf{U}}, \mathbf{S}, \mathbf{V}]$= svd($\mathbf{B}$);
        	  \STATE $\mathbf{U}= \mathbf{Q}\tilde{\mathbf{U}}$;
        	  \STATE $\mathbf{U}=\mathbf{U}(:, 1:k)$; $\mathbf{V}=\mathbf{V}(:, 1:k)$; $\mathbf{S}= \mathbf{S}(1:k,1:k)$;
        	  \STATE \textbf{return} $\mathbf{U, S, V}$.
        \end{algorithmic}
        \end{algorithm}

In the algorithm, the while loop corresponds to Steps 3 and 4 in Algorithm 3, but involves only one pass over $\mathbf{A}$. 
In every step, small matrices in size $m\times l$ or $n\times l$ (noting $l \ll \min(m, n)$ ) are used. So, the memory cost of this algorithm is small, which can be bounded by that for storing $(m+2n)l$ floating numbers. The computational cost of this algorithm is the same as Algorithm 3 and 1 \cite{Halko2011}, i.e., $\mathcal{O}(mnk)$ or $\mathcal{O}(mn\log(k))$ flops. The theoretical error bounds of $\mathbf{A-QB}$ also apply to $\mathbf{A-USV^\top}$ in Algorithm 4, as the latter hardly induces new error.

This single-pass algorithm requests that the data matrix $\mathbf{A}$ is stored in a row-major format. If it is given in a column-major format,  we can apply the algorithm to $\mathbf{A}^{\top}$ instead.

  In case there is a request for higher accuracy, the power scheme can be applied with a small $P$. If $P\!=\!1$, it is equivalent to replacing $\mathbf{A}$ with $\mathbf{A}\mathbf{A}^{\top}\mathbf{A}$ in the algorithm. It can be implemented by adding one pass over  $\mathbf{A}$, even the orthonormalization is enforced for better accuracy  \cite{Martin2016,RSVDPACK}. Nevertheless, the single-pass algorithm works well in many applications.

\section{Experiments}

All experiments are carried out on a Linux server with two 12-core Intel Xeon E5-2630 CPUs (2.30 GHz), 15 MB of L3 cache, and 32 GB RAM. The algorithms have been implemented in C with OpenMP derivatives for multi-thread computing. The compiler used is Intel ICC with MKL libraries \cite{Intel}. The QR factorization and other basic linear algebra operations are implemented through LAPACK routines which are automatically executed in parallel on the multi-core CPUs.

We first validate the accuracy of the proposed single-pass algorithm. Then, large test cases stored on hard disk in IEEE single-precision float format are used to validate the algorithm's efficiency. In all experiments, the block size $b=10$.

\subsection{Accuracy Validation}
We consider test matrices owning the following singular spectrums with different decaying behavior, where $\sigma_{ii}$ denotes the $i$-th singular value (i.e., a diagonal element of matrix $\mathbf{\Sigma}$). 
    \begin{itemize}
        \item Type 1: 
        \[ 
\setlength{\abovedisplayskip}{3pt}
\setlength{\belowdisplayskip}{3pt}
        \sigma_{ii} = \begin{cases}10^{-4(i-1)/19}, ~~~~~~~~~~i= 1, 2, \cdots, 20, \\
        10^{-4}/(i-20)^{1/10}, ~i=21, 22, \cdots, \min(m,n). \end{cases}        \]
        \item Type 2: $\sigma_{ii} = i^{-2},~~ i= 1, 2, \cdots$.
        \item Type 3: $\sigma_{ii} = i^{-3},~~ i= 1, 2, \cdots$.
        \item Type 4: $\sigma_{ii} = e^{-i/7},~~ i= 1, 2, \cdots$.
        \item Type 5: $\sigma_{ii} = 10^{-i/10},~~ i= 1, 2, \cdots$.
    \end{itemize}    
Type 1 is from \cite{Halko2011}, and Type 3 and Type 5 are from \cite{Mary2015}. 
These singular value distributions are shown in Fig. 2. It reveals that the singular values of Type 1 and Type 2 matrices decay asymptotically slowly, although they attenuate very fast at the start. The singular values of Type 4 and Type 5 matrices decay asymptotically faster.
\begin{figure}[h]
\centering
\setlength{\abovecaptionskip}{0.02 cm}
\setlength{\fboxsep}{0.5pt}
\includegraphics[height=2in]{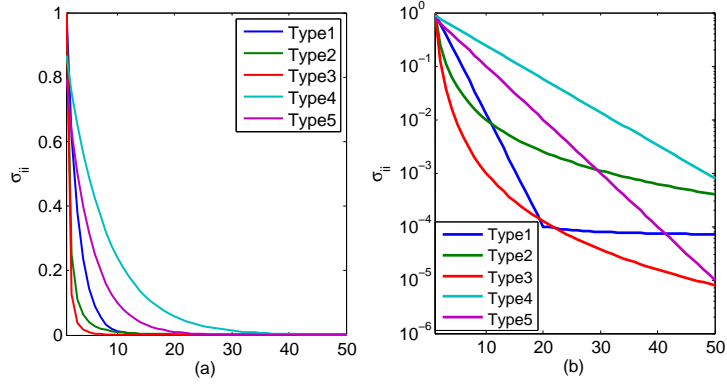}
\caption{Different decay behavior of the singular values of the test matrices. (a) Normal plot, (b) Semi-logarithmic plot.}
\label{fig2}
\end{figure} 
 \begin{figure}[h]         \centering
\setlength{\abovecaptionskip}{0.02 cm}
\subfigure[Type 2 matrix] {\includegraphics[height=2in]{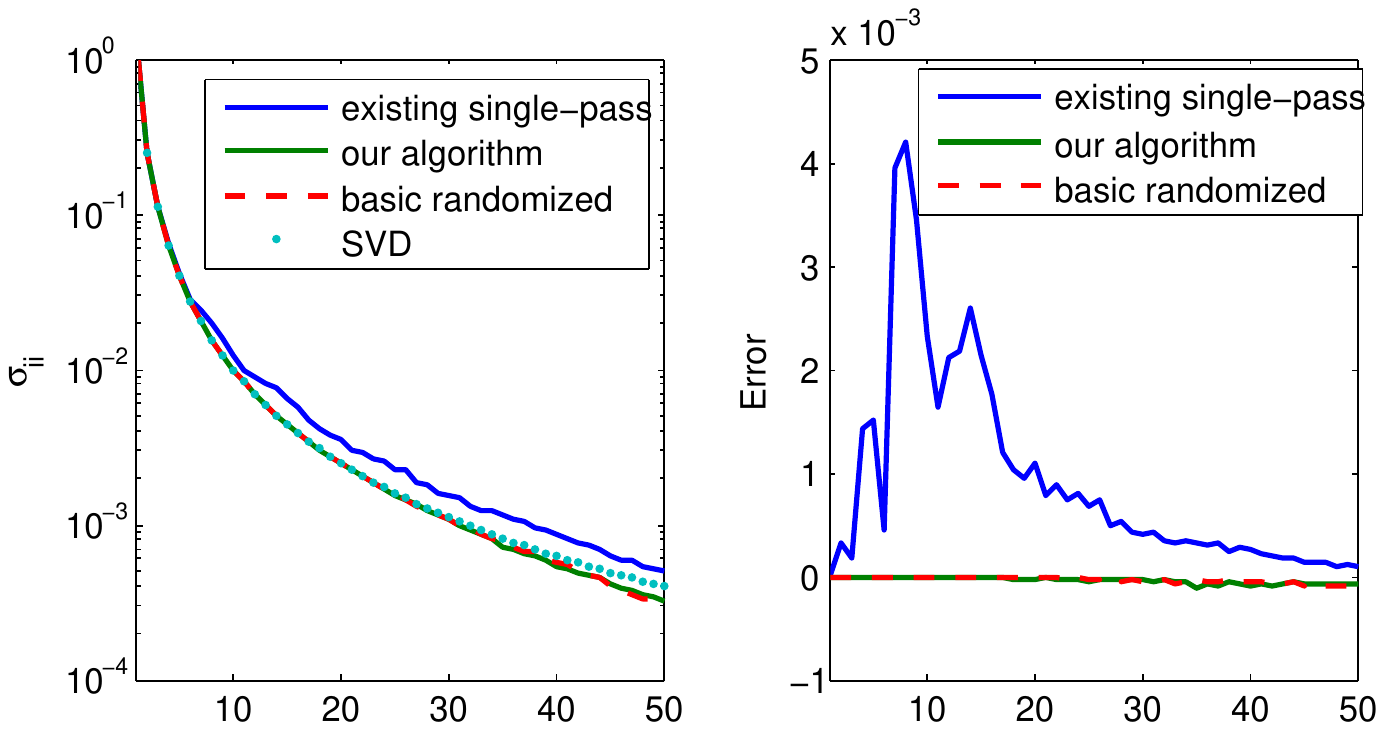}}
\subfigure[Type 4 matrix] {\includegraphics[height=2in]{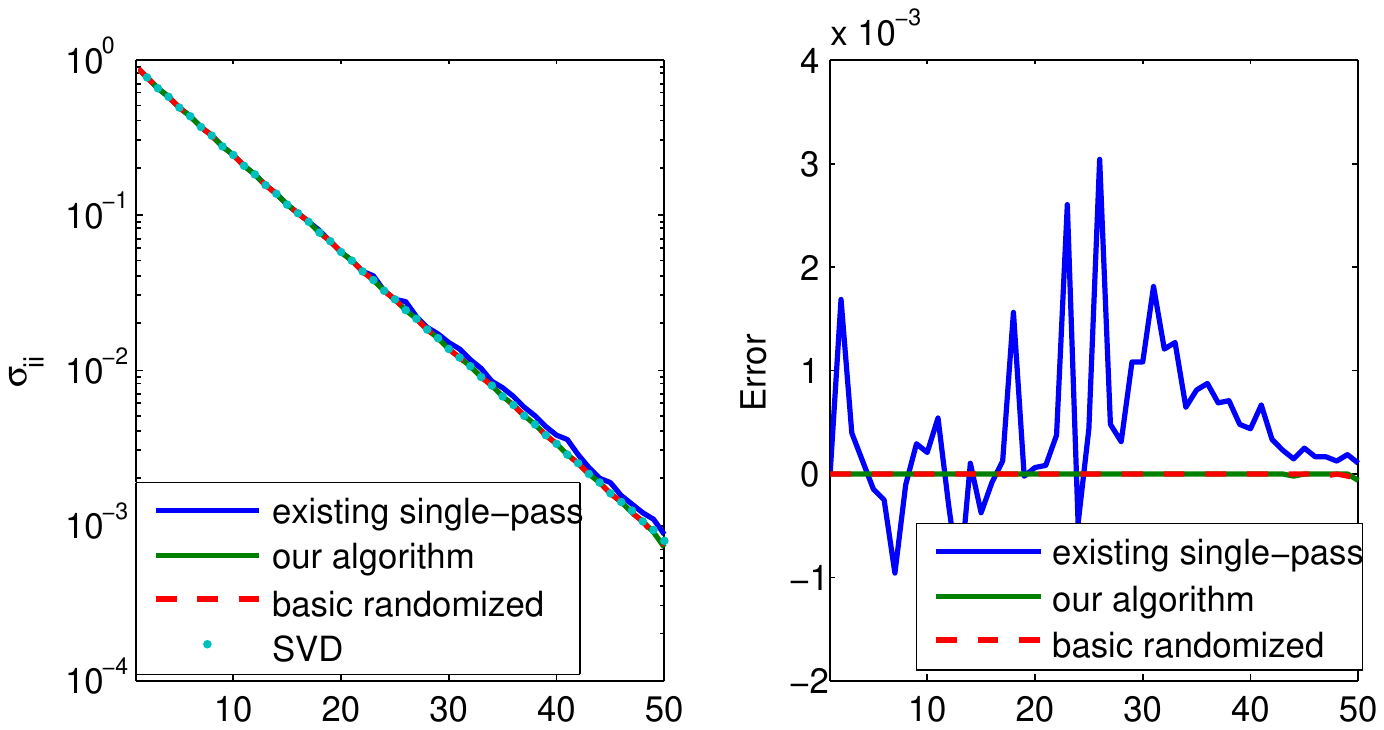}}
        \caption{The computed singular values for a slow-decay and a fast-decay matrix, showing the accuracy of our algorithm.}
        \label {Fig3}
        \end{figure}

 For each type, we construct a $3000\times 3000$ matrix through multiplying $\mathbf{\Sigma}$ with randomly drawn orthogonal 
matrices $\mathbf{U}$ and $\mathbf{V}$. We compute the first 50 singular values and singular vectors for each matrix with the basic randomized Algorithm 1, the existing single-pass algorithm (Algorithm 2) and our Algorithm 4, and compare the results with the accurate values obtained by SVD. 
   The over-sampling parameter is set to 10 (i.e., $l=60$). Fig. 3 shows the computed singular values of two matrices, which demonstrates the single-pass algorithm in \cite{Halko2011review} produces much larger error, and the results of  Algorithms 1 and 4 are indistinguishable. It also reveals that the algorithms produce better results for matrices with asymptotically faster decay of singular values. This is a common property of the randomized algorithms based on QB approximation \cite{Halko2011review,Mary2015,Martin2016}. So, we will focus on the accuracy for the matrices with slow decay of singular values.
\begin{figure}[h]
\centering 
\setlength{\abovecaptionskip}{0.03 cm}
\setlength{\fboxsep}{0.5pt}
\includegraphics[height=2in]{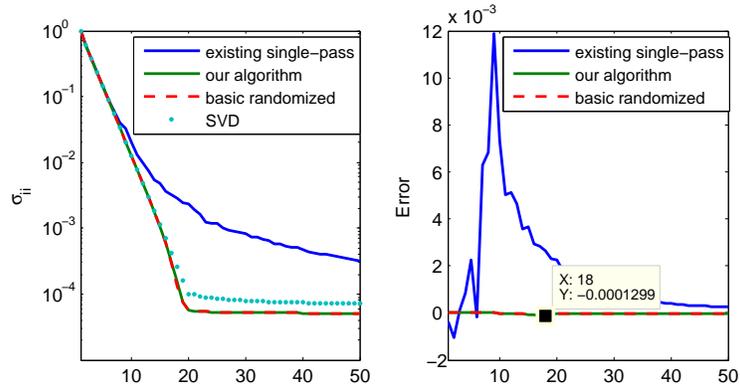}
\caption{The computed singular values and their absolute errors for a very slow-decay matrix (Type 1), showing the advantage of our algorithm over Algorithm 2.}
\label{fig4}
\end{figure} 

For the Type 1 matrix, the accuracy of the randomized algorithms all decreases; the existing single-pass algorithm \cite{Halko2011review} produces considerably large error (up to $1.2\!\times\! 10^{-2}$), as shown in Fig. 4. 
 While using the proposed Algorithm 4, we can reduce the maximum error to $1.3\!\times\! 10^{-4}$ ($\sim$ \textbf{92X} \textbf{smaller}). And, its accuracy looks acceptable. 
Fig. 5(a) shows the first principal components (i.e., $\mathbf{v}_1$) computed by SVD and our algorithm respectively, which looks indistinguishable. Their difference in  $l_\infty$-norm is only $2.8\!\times\! 10^{-5}$. For the other principal components, we calculate the correlation coefficient between the results obtained with the both methods individually. As shown in Fig. 5(b), the correlation coefficients are close to 1 (meaning an exact equality of two vectors). The largest difference occurs at the 10th principal component, with a correlation coefficient of 0.9993.  
For other matrices with faster decay of singular
 values (Types 2$\sim$5), the randomized algorithm exhibits better accuracy and outputs more accurate principal components.
\begin{figure}[h]
\centering 
\includegraphics[height=2in]{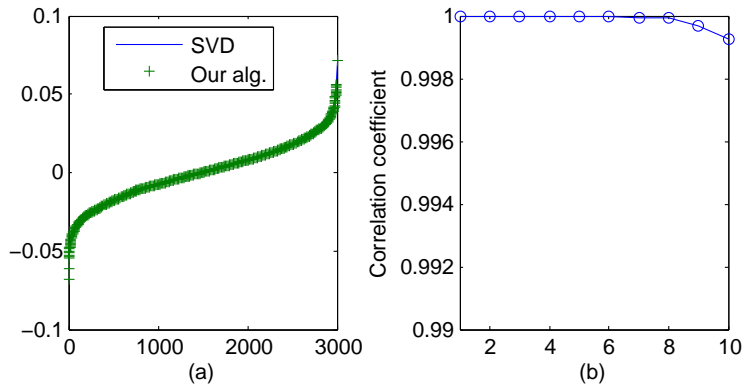}
\caption{The accuracy of our algorithm on principal components (with comparison to the results from SVD). (a) The numeric values of the first principal component ($\mathbf{v}_1$). (b) The correlation coefficients for the first 10 principal components.}
\label{fig5}
\end{figure}  

\subsection{Runtime Comparison}

Following \cite{Halko2011}, we construct several large data using the unitary discrete cosine transform (command ``\texttt{dct}'' in Matlab). They are $200,000\!\times\! 200,000$ matrices following the singular value distributions given in last subsection. Each matrix is stored as a 149 GB file on hard disk. 
We use \emph{fread} function to read the file and run Algorithm 4 for computing PCA. Each time we read $l$ rows of matrix, to avoid extra memory cost. Once they are loaded into RAM, the data are converted to the IEEE double-precision format. 
Algorithm 1 and Algorithm 2 are also tested for comparison.

Some results for the matrices with slow-decay singular values are listed in Table 1. $l$ in Algorithm 4 is set to 20 or 30. $t_{read}$ and $t_{PCA}$ mean the total time (in seconds) for reading the data and the total runtime of the algorithm (including $t_{read}$), respectively. ``max\_err'' is the maximum error of the computed singular values. From the table we see that the time for reading data dominates the total runtime, and the proposed algorithm is about \textbf{2X faster} than the basic randomized algorithm used in \cite{Halko2011} while keeping same accuracy. 
If comparing Algorithm 2 and ours, we see that the former may be slightly faster but produces much larger error.

\begin{table}[h]
  \caption{The results for several $200,000\!\times\! 200,000$ data, which demonstrate the efficiency of our Algorithm 4 (time in unit of second).}
  \label{tab:table1}
  \centering
\renewcommand{\multirowsetup}{\centering}
 \begin{tabular}{@{~}c@{~~}c@{~~}c@{~~}c@{~~}c@{~~}c@{~~}c@{~~}c@{~~}c@{~~}c@{~~}c@{~~}} 
  \toprule
  \multirow{2}{*}{Matrix}  & \multirow{2}{*}{$k$} &
  \multicolumn{3}{c}{Algorithm 1} & \multicolumn{3}{c}{Algorithm 2} &\multicolumn{3}{c}{Algorithm 4} \\
 \cmidrule(r){3-5} \cmidrule(r){6-8} \cmidrule(r){9-11}
 &  &  $t_{read}$ & $t_{PCA}$ & max\_err & $t_{read}$ & $t_{PCA}$ & max\_err &  $t_{read}$ & $t_{PCA}$ & max\_err \\
\midrule
Type1 & 16 & 2390 & 2607 & 1.7e-3 & 1186 & 1404 & 2.2e-2 & 1206 & 1426 & 1.8e-3 \\
Type1  & 20 & 2420 & 2616 & 9e-4 & 1198   & 1380  & 1.6e-1  & 1217 & 1413 & 1.2e-3  \\ 
Type1 & 24 & 2401 & 2593 & 1e-3 & 1216  & 1400  & 1.5e-1  & 1216 & 1414 & 1.2e-3  \\ 
Type2 & 12 & 2553 & 2764 & 5e-4 & 1267 & 1477 & 3e-2 & 1276 & 1490 & 5e-4  \\
Type3 & 24 & 2587 & 2777 & 1e-5 & 1312 & 1500 & 1.7e-3 & 1310 & 1502 & 2e-5 \\ 
\bottomrule 
 \end{tabular}
\end{table}

To improve the accuracy, the power scheme with $P\!=\!1$ could be applied, which corresponds to one more pass over the data. For Algorithm 4, we just run the while loop once again with $\mathbf{\Omega}$ replaced by $\mathbf{H}$ after ``orth'' operation.  
%
In our experiments, this two-pass algorithm has similar runtime as Algorithm 1, but dramatically reduces ``max\_err'' to $4.6\times 10^{-7}$ and $3\times 10^{-6}$ for the Type 1 and Type 2 matrices, respectively.  

In these experiments, the memory cost of Algorithm 4 ranges from \textbf{402 MB} to \textbf{490 MB}. In contrast, the standard SVD (including the ``\texttt{svds}'' in Matlab for truncated SVD) requests much larger memory than the available physical RAM, and therefore does not work. To take a taste of how fast the proposed randomized algorithm runs, we test a 10,000$\times$10,000 matrix. Performing a complete SVD and ``\texttt{svds}'' for the first 50 principal components take 226 and 219 seconds, respectively, while the proposed algorithm costs only 0.69 seconds.

\subsection{Real Data}

We apply the single-pass algorithm with $k\!=\!50$ to the matrix representing the images of faces from the FERET database \cite{Phillips2000feret}. As in \cite{Halko2011}, we add two duplicates for each image into the data. For each duplicate, the value of a random choice of 10\% of the pixels is set to random numbers uniformly chosen from $0, 1, \cdots, 255$. This forms a $102,042\!\times \! 393,216$ matrix, whose rows consist of images.
Before processing, we normalize the matrix by subtracting from each row its mean, and then dividing it by its Euclidean norm. With the proposed algorithm, it takes 1453 seconds ($\sim$ \textbf{24 minutes}) to process all 150 GB of this data stored on disk. The computed singular values are plotted in Fig. 6. We have also checked the computed ``eigenfaces'', which well match those presented in \cite{Halko2011}.

\begin{figure}[h]         \centering
\subfigure[Computed singular values] {\includegraphics[width=2.9in]{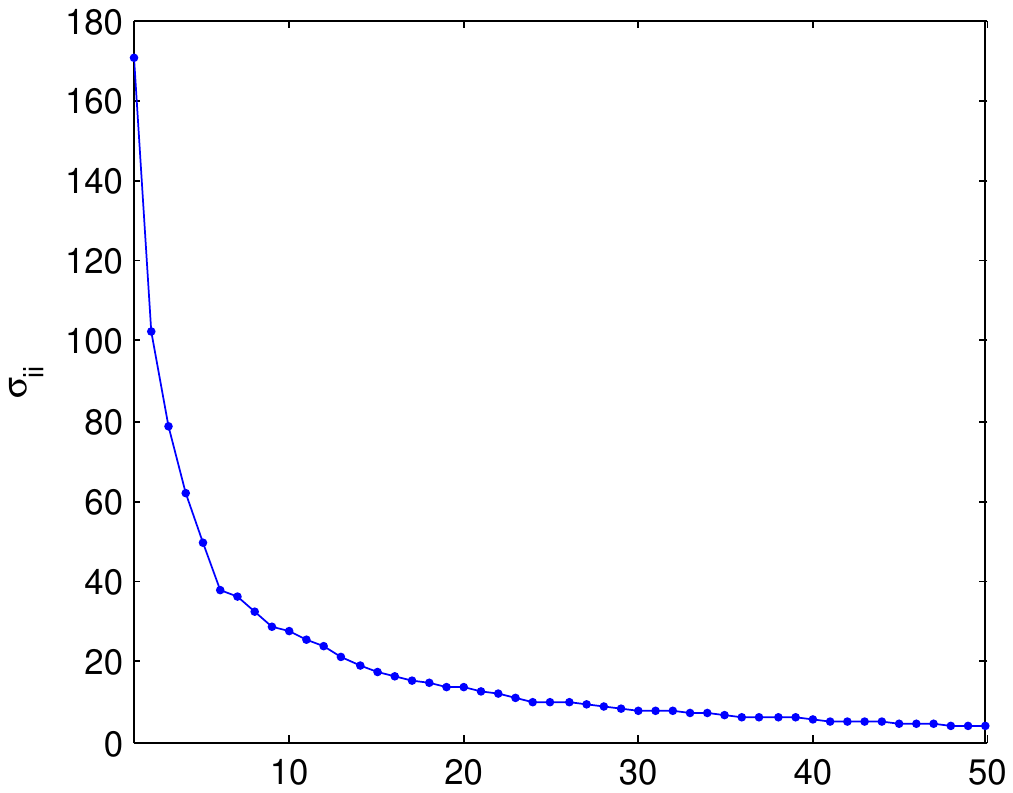}}
\subfigure[Four eigenfaces] {\includegraphics[width=1.53in]{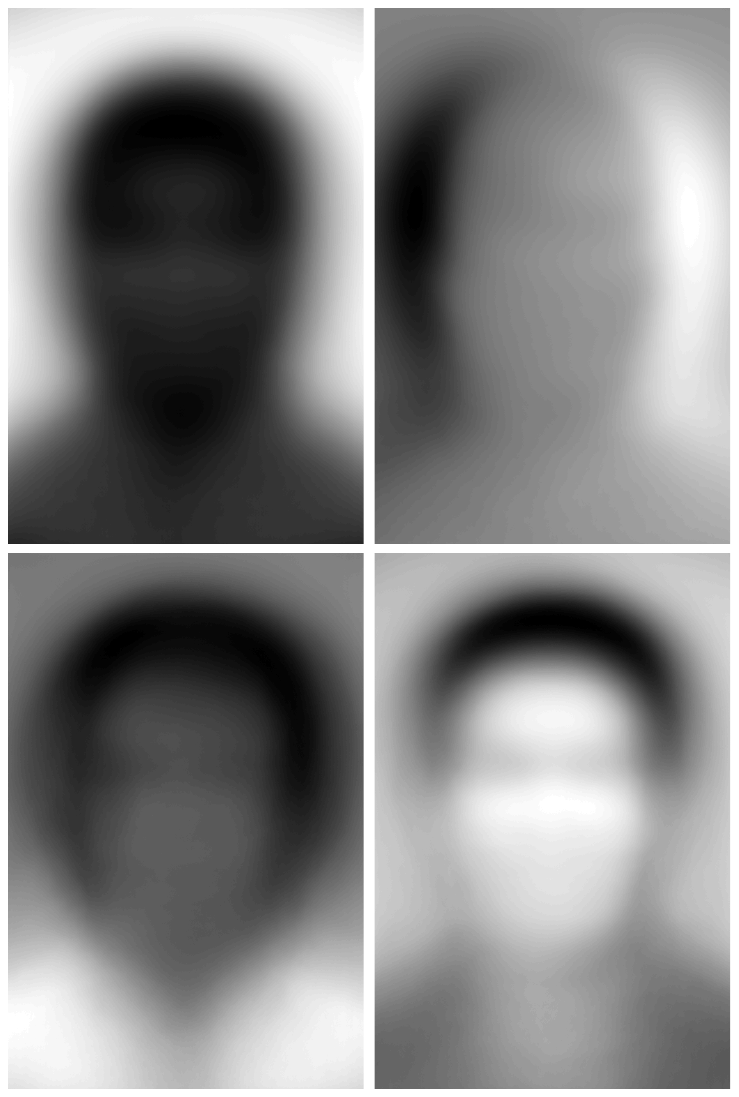}}
        \caption{The computational results for the FERET matrix.}
        \label {Fig3}
        \end{figure} 

\section{Conclusions} 
An algorithm for single-pass PCA of large and high-dimensional data is proposed. It involves only one pass over the data, and keeps the comparable accuracy to the existing randomized algorithms. Experiments demonstrate the algorithm's effectiveness for computing the principal components of large-size ($\sim$150 GB) data with high dimension that cannot be fit in memory, in terms of runtime and memory usage. 

The improvement of the proposed algorithm by utilizing structured random matrix, and its applications to streaming data and on distributed computing environment will be explored in the future. 


\section{ Acknowledgments}
Portions of the research in this paper use the FERET database of facial
images collected under the FERET program, sponsored by the DOD Counterdrug
Technology Development Program Office.

\section{Appendix: Orthogonal Projector Matrix}
An orthogonal projector matrix corresponds to a linear transformation which converts any vector to its orthogonal projection on a subspace. The projector matrix is uniquely determined by the subspace. 
Let $\mathbf{P}_X$ denote the projector matrix corresponding to the orthogonal projection transformation onto $range(\mathbf{X})$. 
Based on the theory of linear least squares, if $\mathbf{X}$ has full column rank \cite{MatrixBook}, 
\begin{equation}\label{eq:projector}
\mathbf{P}_X = \mathbf{X}(\mathbf{X}^{\top}\mathbf{X})^{-1}\mathbf{X}^{\top} ~.
\end{equation}
It is simplified to $\mathbf{P}_X = \mathbf{X}\mathbf{X}^{\top}$, 
if $\mathbf{X}$ is an orthonormal matrix. 
Obviously, $range(\mathbf{P}_A)=range(\mathbf{A})$. From (\ref{eq:projector}), it is easy to derive the following properties of a projector matrix.
%
    \begin{lemma}\label{perp_prod_zero}
        For a real-valued matrix $\mathbf{X}$ with full column rank,
    \begin{itemize}
    	\item $\mathbf{P}_X$ is a symmetric matrix.
    	\item $\mathbf{P}_X^2 = \mathbf{P}_X$.
    	\item $\mathbf{I}-\mathbf{P}_X$ is the orthogonal projector determined by the orthogonal complement of $range(\mathbf{X})$.
    	\item $\mathbf{P}_X \mathbf{X} - \mathbf{X} = \mathbf{O}$ , where $\mathbf{O}$ is the zero matrix.
    \end{itemize}    
    \end{lemma}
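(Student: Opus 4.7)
The plan is to verify each of the four listed properties directly from the closed-form expression $\mathbf{P}_X = \mathbf{X}(\mathbf{X}^{\top}\mathbf{X})^{-1}\mathbf{X}^{\top}$ given in (\ref{eq:projector}). Because $\mathbf{X}$ has full column rank, $\mathbf{X}^{\top}\mathbf{X}$ is symmetric positive definite and hence invertible, so the formula is well defined; this underlies every step below.

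First I would establish symmetry by taking the transpose: $\mathbf{P}_X^{\top} = \mathbf{X}\bigl((\mathbf{X}^{\top}\mathbf{X})^{-1}\bigr)^{\top}\mathbf{X}^{\top}$, and observe that the inverse of a symmetric matrix is symmetric, so $\mathbf{P}_X^{\top} = \mathbf{P}_X$. Idempotence then follows by direct multiplication: in $\mathbf{P}_X^2$, the middle factor $\mathbf{X}^{\top}\mathbf{X}$ cancels with $(\mathbf{X}^{\top}\mathbf{X})^{-1}$, leaving exactly $\mathbf{P}_X$. Property~4 is essentially the same cancellation: $\mathbf{P}_X \mathbf{X} = \mathbf{X}(\mathbf{X}^{\top}\mathbf{X})^{-1}(\mathbf{X}^{\top}\mathbf{X}) = \mathbf{X}$, so $\mathbf{P}_X\mathbf{X} - \mathbf{X} = \mathbf{O}$.

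The only item requiring a bit more care is Property~3, that $\mathbf{I}-\mathbf{P}_X$ is the orthogonal projector onto $range(\mathbf{X})^{\perp}$. I would argue in two steps. First, $\mathbf{I}-\mathbf{P}_X$ is symmetric (from Property~1) and idempotent: $(\mathbf{I}-\mathbf{P}_X)^2 = \mathbf{I} - 2\mathbf{P}_X + \mathbf{P}_X^2 = \mathbf{I}-\mathbf{P}_X$ by Property~2; these two conditions characterise an orthogonal projector. Second, I would identify its range: from $\mathbf{X}^{\top}(\mathbf{I}-\mathbf{P}_X) = \mathbf{X}^{\top} - \mathbf{X}^{\top}\mathbf{X}(\mathbf{X}^{\top}\mathbf{X})^{-1}\mathbf{X}^{\top} = \mathbf{O}$, every vector in $range(\mathbf{I}-\mathbf{P}_X)$ lies in $\ker(\mathbf{X}^{\top}) = range(\mathbf{X})^{\perp}$, while for any $\mathbf{v}\in range(\mathbf{X})^{\perp}$ we have $\mathbf{P}_X \mathbf{v} = \mathbf{0}$ (since $\mathbf{X}^{\top}\mathbf{v}=\mathbf{0}$), hence $(\mathbf{I}-\mathbf{P}_X)\mathbf{v}=\mathbf{v}$. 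The two inclusions give $range(\mathbf{I}-\mathbf{P}_X) = range(\mathbf{X})^{\perp}$, and combined with the projector properties this identifies $\mathbf{I}-\mathbf{P}_X$ as the stated orthogonal projector.

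None of the steps looks genuinely hard; the main thing to be careful about is not to confuse ``orthogonal projector'' (symmetric and idempotent, as used above) with a mere projector (only idempotent). The argument for Property~3 depends on this distinction, and it is the one place where I would state explicitly which characterisation of an orthogonal projector I am invoking.
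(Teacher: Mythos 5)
Your verification is correct and matches the paper's intent exactly: the paper gives no explicit proof, stating only that the properties are easy to derive from the formula $\mathbf{P}_X = \mathbf{X}(\mathbf{X}^{\top}\mathbf{X})^{-1}\mathbf{X}^{\top}$, and your argument is precisely that direct derivation (transpose and cancellation for Properties 1, 2, 4, plus the standard symmetric-and-idempotent characterisation with the range identification $range(\mathbf{I}-\mathbf{P}_X)=\ker(\mathbf{X}^{\top})=range(\mathbf{X})^{\perp}$ for Property 3). No gaps; nothing further is needed.
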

%
%

\end{document}